\newtheorem{theorem}{Theorem}
\newtheorem{corollary}[theorem]{Corollary}
\newtheorem{definition}[theorem]{Definition}
\newtheorem{example}[theorem]{Example}
\newtheorem{lemma}[theorem]{Lemma}
\newtheorem{proposition}[theorem]{Proposition}
\begin{document}

\title{On the Quasi-Balanceable Class of Linear Quantum Stochastic Systems 
\footnote{This research was supported by the Australian Research Council}}

\author{Onvaree~Techakesari and Hendra~I.~Nurdin
\thanks{O. Techakesari and H. I. Nurdin are with the School of Electrical Engineering and Telecommunications, UNSW Australia, Sydney NSW 2052, Australia. Email: o.techakesari@unsw.edu.au \& h.nurdin@unsw.edu.au}}

\maketitle

\begin{abstract}
This paper concerns the recently proposed quasi-balanced truncation model reduction method for linear quantum stochastic systems.
It has previously been shown that the quasi-balanceable class of systems (i.e. systems that can be truncated via the quasi-balanced method) includes the class of completely passive systems. In this work, we refine the previously established characterization of quasi-balanceable systems  and show that the class of quasi-balanceable systems is strictly larger than the class of completely passive systems. In particular, we derive a novel characterization of completely passive linear quantum stochastic systems solely in terms of the controllability Gramian of such systems.
Exploiting this result, we prove that all linear quantum stochastic systems with a pure Gaussian steady-state (active systems included)  are all quasi-balanceable, and establish a new complete parameterization for this important class of systems. Examples are provided to illustrate our results.
\end{abstract}

\noindent{\bf Keywords:} Linear quantum stochastic systems, model reduction, symplectic transformations, quantum optical systems, open Markov quantum systems


\section{Introduction} \label{sec:intro}

Over the past two decades, quantum feedback control has garnered a great deal of interest among physicists and control theorists. This active research area is motivated by the promising benefits of the quantum mechanical features that are present in systems such as those in quantum optics and optomechanical systems; for example, see \cite{WM10,BE08,HM12,CTSAM13}. 
 A key aspect in quantum feedback control theory is the modeling of the quantum systems under consideration. In particular, linear quantum stochastic differential equations (QSDE) have been consistently used to describe open quantum harmonic oscillators coupled with external quantum fields, giving rise to the class of linear quantum stochastic systems \cite{BE08,JNP06,GJN10,NJD08,Nurd10a,Nurd10b,GZ13}. This class can be viewed as a quantum analogue of classical (non-quantum) linear stochastic systems and are prominent in  fields such as quantum optics, optomechanics, and superconducting circuits. They model devices ranging from optical cavities, mesoscopic mechanical resonators, optical and superconducting parametric amplifiers, to linear quantum memories such as  gradient echo quantum memories (GEM); see, e.g.,  \cite{GZ04,HM12,KAKKCSL13,HCHJ13}. 
Several applications of these systems have been reported in the literature. To mention a few, they can serve as coherent feedback controllers \cite{JNP06,NJP07b} to cool optomechanical systems \cite{HM12}, modify the characteristics of squeezed light produced by an OPO \cite{CTSAM13}, and reshape the dynamics of an electromechanical circuit \cite{KAKKCSL13}. As optical filters, they can, for instance, modify the wavepacket shape of single \cite{ZJ13} and multi-photon \cite{Zhang14} sources. Also,  Gaussian cluster states that are of interest for continuous-variable one way quantum computers \cite{MLGWRN06} can be generated dissipatively using linear quantum stochastic systems \cite{KY12}.  Besides for quantum information technologies, linear quantum stochastic systems are also of interest for classical signal processing on quantum devices, for instance as light processors when photons, rather than electrons, transport information between cores on a chip and between chips \cite{BKSWW07}. In combination with proposed nonlinear ultra-low power optical logic gates, such as \cite{Mab11b}, they form building blocks for ultra-efficient all-optical classical information processing circuits.

Linear quantum stochastic systems can be completely described (in the Heisenberg picture) by a quartet of matrices $(A,B,C,D)$ analogous to linear systems in classical control theory. However, unlike classical (non-quantum) systems, the matrices $(A,B,C,D)$ of a linear quantum stochastic system cannot be arbitrary and must satisfy a set of conditions for the system to be physically realizable \cite{JNP06}.
In \cite{NJD08,Nurd10a,Nurd10b}, a network synthesis theory for physically realizable linear quantum stochastic systems has been presented as a quantum analogue of network synthesis theory for linear electrical networks. 
Unfortunately, a linear quantum stochastic system may have many degrees of freedom 
(i.e., the system dimension is large) making it difficult to synthesize.
In this situation, a physically realizable approximating (reduced order) model of the system, that leads to a small modeling error, is of practical interest. Moreover, acquisition of an appropriately reduced dynamical system model allows us to examine (in a computationally feasible manner) the physical phenomena of the system and, hence, provides a means to design an appropriate controller. 
The contributions of this paper are related to this model reduction problem.

In \cite{BvHS07,GNW10,P13}, singular perturbation approximation methods have been studied for model reduction of quantum systems, whilst an eigenvalue truncation method for linear quantum stochastic systems has been presented in \cite{Pet12}. 
In a recent work \cite{Nurd13b}, the quasi-balanced truncation method has been proposed for linear quantum stochastic systems as an adaptation of the well-known balanced truncation method for classical linear systems. As with its classical counterpart, the quasi-balanced method truncates modes corresponding to subsystems with the smallest Hankel singular values, while preserving the physical realizability property in the reduced order approximation. A bound on the model approximation error, in terms of a bound on the $H^{\infty}$ norm of the error transfer function, is also provided. However, in contrast with the classical case, it is not possible to apply  quasi-balanced truncation to generic linear quantum stochastic systems, and importantly, a complete physical description of all linear quantum stochastic systems that belong to this class has not been established. The reason for the former is that the allowable similarity transformations for quasi-balancing are symplectic transformations that preserve canonical commutation relations of internal degrees of freedom of the system and  guarantee that the reduced system is again physically realizable. A consequence of the restriction to a symplectic transformation is that it is not always possible to transform an arbitrary linear quantum stochastic system to one in which the controllability and observability Gramians are simultaneously diagonal.

In this paper, we present a new, more explicit, characterization of quasi-balanceable linear quantum stochastic systems (i.e. systems that can be truncated using the quasi-balanced method), refining a necessary and sufficient commutator condition obtained earlier in \cite{Nurd13b}. We show that all asymptotically stable linear quantum stochastic systems with  a pure Gaussian steady-state, as parameterized in \cite{KY12,Yama12}, are quasi-balanceable. Additionally,  we obtain a new complete parameterization for this class of system. This illustrates that the class of quasi-balanceable systems is strictly larger than the class of completely passive linear quantum stochastic systems as established in \cite{Nurd13b}. Additionally, we also construct a theoretical example of a quasi-balanceable system that does not have a pure Gaussian steady-state. 

The remainder of this paper is structured as follows. In Section \ref{sec:prelim}, we define the class of linear quantum stochastic systems under consideration. The new characterization of quasi-balanceable systems is provided in Section \ref{sec:characterization}.
Section \ref{sec:Gaussian} presents an application of the new characterization of quasi-balanceable systems to linear quantum stochastic systems that have a pure Gaussian steady-state. Some concluding remarks are then provided in Section \ref{sec:conclusion}.

\section{Preliminaries} \label{sec:prelim}

\subsection{Notation}
We will use the following notation: $\imath=\sqrt{-1}$, $^*$ denotes the adjoint of a linear operator
as well as the conjugate of a complex number. If $A=[a_{jk}]$ then $A^{\#}=[a_{jk}^*]$, and $A^{\dag}=(A^{\#})^{\top}$, where $(\cdot)^{\top}$ denotes matrix transposition.  $\Re\{A\}=(A+A^{\#})/2$ and $\Im\{A\}=\frac{1}{2\imath}(A-A^{\#})$.
We denote the identity matrix by $I$ whenever its size can be
inferred from context and use $I_{n}$ to denote an $n \times n$
identity matrix. Similarly, $0_{m \times n}$ denotes  a $m \times n$ matrix with zero
entries but drop the subscript when its dimension  can be determined from context. We use
${\rm diag}(M_1,M_2,\ldots,M_n)$  to denote a block diagonal matrix with
square matrices $M_1,M_2,\ldots,M_n$ on its diagonal, and ${\rm diag}_{n}(M)$
denotes a block diagonal matrix with the
square matrix $M$ appearing on its diagonal blocks $n$ times. Also, we
will let $\mathbb{J}=\left[\begin{array}{rr}0 & 1\\-1&0\end{array}\right]$ and $\mathbb{J}_n=I_{n} \otimes \mathbb{J}={\rm diag}_n(\mathbb{J})$.

\subsection{The class of linear quantum stochastic systems}
Let $x=(q_1,p_1,q_2,p_2,\ldots,q_n,p_n)^\top$ denote a vector of  the canonical position and
momentum operators of a {\em many degrees of freedom quantum
harmonic oscillator} satisfying the canonical commutation
relations (CCR)  $xx^\top -(xx^\top )^\top =2\imath \mathbb{J}_n$.
A {\em linear quantum stochastic
system} \cite{BE08,JNP06,NJD08} $G$ is a quantum system defined by three {\em parameters}:
(i) A quadratic Hamiltonian $H=\frac{1}{2}  x^\top  R x$ with $R=R^\top  \in
\mathbb{R}^{2n \times 2n}$, (ii) a coupling operator $L=Kx$, where $K$ is
an $m \times 2n$ complex matrix, and (iii) a unitary $m \times m$
scattering matrix $S$. For shorthand, we write $G=(S,L,H)$ or $G=(S,Kx,\frac{1}{2}  x^\top Rx)$. The
time evolution $x(t)$ of $x$ in the Heisenberg picture ($ t
\geq 0$) is given by the quantum stochastic differential equation
(QSDE) (see \cite{BE08,JNP06,NJD08}):
\begin{align}
dx(t) &=  A_0x(t)dt+ B_0\left[\small \begin{array}{c} d\mathcal{A}(t)
\\ d\mathcal{A}(t)^{\#} \end{array} \normalsize \right];  
x(0)=x, \notag\\
dY(t) &=  C_0 x(t)dt+  D_0 d\mathcal{A}(t), \label{eq:qsde-out}
\end{align}
with $A_0=2\mathbb{J}_n(R+\Im\{K^{\dag}K\})$, $B_0=2\imath \mathbb{J}_n [\begin{array}{cc}
-K^{\dag}S & K^\top S^{\#}\end{array}]$,
$C_0=K$, and $D_0=S$. Here $Y(t)=(Y_1(t),\ldots,Y_m(t))^{\top}$ is a vector of
continuous-mode bosonic {\em output fields} that results from the interaction of the quantum
harmonic oscillators and the incoming continuous-mode bosonic quantum fields in the $m$-dimensional vector
$\mathcal{A}(t)$. Note that the dynamics of $x(t)$ is linear, and
$Y(t)$ depends linearly on $x(s)$, $0 \leq s \leq t$. We refer to $n$ as the {\em
  degrees of freedom} of the system or, more simply, the {\em degree} of the system.
    
Following \cite{JNP06}, it will be  convenient to write the dynamics in quadrature form as
\begin{align}
dx(t)&=Ax(t)dt+Bdw(t);\, x(0)=x. \nonumber\\
dy(t)&= C x(t)dt+ D dw(t), \label{eq:qsde-out-quad}
\end{align}
with
\begin{eqnarray*}
w(t)&=&2(\Re\{\mathcal{A}_1(t)\},\Im\{\mathcal{A} _1(t)\},\Re\{\mathcal{A} _2(t)\},\Im\{\mathcal{A} _2(t)\},\ldots,\Re\{\mathcal{A} _m(t)\},\Im\{\mathcal{A} _m(t)\})^{\top}; \\
y(t)&=&2(\Re\{Y_1(t)\},\Im\{Y_1(t)\},\Re\{Y_2(t)\},\Im\{Y_2(t)\}, \ldots,\Re\{Y_m(t)\},\Im\{Y_m(t)\})^{\top}.
\end{eqnarray*}
Here, the real matrices $A,B,C,D$ are in a one-to-one correspondence
with 
$A_0,B_0,C_0,D_0$. Also, $w(t)$ is taken to be in a vacuum state where it
satisfies the It\^{o} relationship $dw(t)dw(t)^{\top} = (I+\imath \mathbb{J}_m)dt$; see \cite{JNP06}. Note that in this form it follows that $D$ is a real unitary symplectic matrix. That is, it is both unitary (i.e., $DD^{\top}=D^{\top} D=I$) and symplectic (a real $m \times m$ matrix is symplectic if $D \mathbb{J}_m D^{\top} =\mathbb{J}_m$). However, in the most general case, $D$ can be generalized to a symplectic matrix that represents a quantum network that includes ideal squeezing devices acting on the incoming field $w(t)$ before interacting with the system \cite{GJN10,NJD08}.  
In general one may not be interested in all outputs of the system but only in a subset of them, see, e.g., \cite{JNP06}. That is, one is often only interested in certain pairs of the output field quadratures in $y(t)$. Thus, in the most general scenario, $y(t)$ can have an even dimension $n_y <2m$.
The matrices $A$, $B$, $C$, $D$ of a linear quantum stochastic system cannot be arbitrary and are not independent of one another. In fact, for the system to be physically realizable \cite{JNP06,NJD08}, meaning it represents a meaningful physical system, they must satisfy the constraints (see \cite{JNP06,NJD08,Nurd13b}):
\begin{eqnarray}
&&A\mathbb{J}_n + \mathbb{J}_n A^{\top} + B\mathbb{J}_mB^{\top}=0, \label{eq:pr-1}\\
&& \mathbb{J}_n  C^{\top} +B\mathbb{J}_mD^{\top}=0, \label{eq:pr-2}\\
&&D \mathbb{J}_m D^{\top} =  \mathbb{J}_{n_y/2}. \label{eq:pr-3}
\end{eqnarray}

\section{Model reduction by quasi-balanced truncation} \label{sec:characterization}

In the work \cite{Nurd13b}, the quasi-balanced truncation model reduction method was proposed to provide an avenue to approximate a physically realizable linear quantum stochastic system with a smaller (i.e. reduced ordered) physically realizable system. It is essentially an adaptation to linear quantum stochastic systems of the classical balanced truncation method.
This method involves the controllability and observability Gramians of the system.
For an asymptotically stable system ($A$ is Hurwitz), the controllability and observability Gramians, $P=P^{\top}\geq 0$ and $Q=Q^{\top}\geq 0$ ($P>0$ if system is controllable and $Q>0$ if it is observable), respectively, are the unique solutions to the Lyapunov equations:
\begin{align}
 AP + PA^\top + BB^\top &= 0 \label{eq:lyap-P} \\
 QA + A^\top Q + C^\top C &= 0. \label{eq:lyap-Q}
\end{align}
In this quasi-balancing method, the system is symplectically transformed to a similar system with diagonal controllability and observability Gramians.
Pairs of quadratures corresponding to the smallest entries of the geometric mean of the product of the diagonal Gramians (which are also pairs corresponding to the smallest Hankel singular values of the system) are then truncated. In \cite{Nurd13b}, it is also shown that $H^{\infty}$ truncation error bounds can be established for quasi-balanced truncation, analogous to the classical case. Let us now re-state the important result on the existence of a symplectic similarity transformation $T$ required in applying the quasi-balanced truncation method.
\begin{theorem} \label{thm:quasi-balanced} 
\cite[Theorem 8]{Nurd13b}
There exists a symplectic matrix $T$ such that $TPT^\top = \Sigma_P$, $T^{-\top}QT^{-1} = \Sigma_Q$, 
with $\Sigma_X$ $(X \in \{P,Q\})$ of the form $\Sigma_X = {\rm diag}(\sigma_{X,1}I_2,\sigma_{X,2}I_2,\ldots,\sigma_{X,n}I_2)$
with $\sigma_{X,1},\sigma_{X,2},\ldots,\sigma_{X,n} \geq 0$
the symplectic eigenvalues of $X$ (symplectic eigenvalues of $P$ need not be the same as those of Q), if and only if
$[\mathbb{J}_n P, Q\mathbb{J}_n] = 0$.
\end{theorem}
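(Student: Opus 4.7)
My plan is to prove the two directions separately, with the ``only if'' part reducing to a direct similarity computation and the ``if'' part doing the real work.

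For the ``only if'' direction, I would exploit the symplectic identity $T\mathbb{J}_n T^\top = \mathbb{J}_n$, which rearranges to $T\mathbb{J}_n = \mathbb{J}_n T^{-\top}$ and (taking inverses or transposes) to $\mathbb{J}_n T^{-1} = T^\top \mathbb{J}_n$. From $P = T^{-1}\Sigma_P T^{-\top}$ and $Q = T^\top \Sigma_Q T$, applying these identities yields
\begin{equation*}
\mathbb{J}_n P = T^\top (\mathbb{J}_n \Sigma_P) T^{-\top}, \qquad Q\mathbb{J}_n = T^\top (\Sigma_Q \mathbb{J}_n) T^{-\top},
\end{equation*}
so $[\mathbb{J}_n P, Q\mathbb{J}_n] = T^\top [\mathbb{J}_n \Sigma_P, \Sigma_Q \mathbb{J}_n] T^{-\top}$. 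Since $\mathbb{J}_n \Sigma_P$ and $\Sigma_Q \mathbb{J}_n$ are block-diagonal with $2\times 2$ blocks $\sigma_{P,i}\mathbb{J}$ and $\sigma_{Q,i}\mathbb{J}$, which are scalar multiples of $\mathbb{J}$ and so commute, the commutator vanishes.

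For the ``if'' direction I would proceed in three steps. First, assuming $P>0$ for clarity, I apply Williamson's theorem to $P$ to produce a symplectic $T_1$ with $T_1 P T_1^\top = \Sigma_P$. Setting $Q_1 = T_1^{-\top} Q T_1^{-1}$, the same conjugation identities above transport the hypothesis to $[\mathbb{J}_n \Sigma_P, Q_1 \mathbb{J}_n] = 0$. Second, I analyse the centraliser of $\mathbb{J}_n \Sigma_P = \mathrm{diag}(\sigma_{P,i}\mathbb{J})$: when the $\sigma_{P,i}$ are all distinct and positive, $Q_1 \mathbb{J}_n$ is forced to be $2\times 2$ block-diagonal with each block commuting with $\mathbb{J}$, i.e.\ of the form $a_i I_2 + b_i \mathbb{J}$; symmetry $Q_1 = Q_1^\top$ then kills the $a_i$ and leaves $Q_1 = \mathrm{diag}(b_i I_2)$, already in the required form, so one takes $T = T_1$.

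Third, in the degenerate case I group indices by equal values of $\sigma_{P,i}$. On a group of $k$ coincident modes, $\Sigma_P$ restricts to $\sigma I_{2k}$ and $Q_1$ restricts to a symmetric PSD $2k\times 2k$ matrix that commutes with $\mathbb{J}_k$. Under the standard real-to-complex identification $\mathbb{R}^{2k}\cong\mathbb{C}^k$ induced by $\mathbb{J}_k$, such a matrix corresponds to a Hermitian matrix on $\mathbb{C}^k$, which is diagonalised by a unitary; the pull-back is an orthogonal symplectic transformation $T_2$ that preserves $\sigma I_{2k}$ while taking the corresponding block of $Q_1$ to $\mathrm{diag}(\sigma_{Q,j}I_2)$. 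Assembling these group-wise transformations gives $T = T_2 T_1$. The principal obstacle is precisely this degenerate case: one must simultaneously preserve $\Sigma_P$ (forcing restriction to the orthogonal symplectic, hence unitary, subgroup) and diagonalise the $Q_1$ block into the prescribed $I_2$-block form, which is why the Hermitian reinterpretation via the complex structure is essential. A secondary subtlety, handled by restricting to $\mathrm{ran}\,P$ and $\mathrm{ran}\,Q$ and exploiting the fact that the commutator condition aligns zero symplectic-eigenvalue directions, is the case where $P$ or $Q$ is only positive semidefinite.
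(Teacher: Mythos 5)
The paper contains no proof of this statement to compare against: Theorem~\ref{thm:quasi-balanced} is imported verbatim from \cite[Theorem 8]{Nurd13b} and used as a black box, so your attempt can only be judged on its own terms. On those terms it is essentially correct and is the natural self-contained argument. The ``only if'' direction via the identities $\mathbb{J}_n T^{-1}=T^\top \mathbb{J}_n$ and $T\mathbb{J}_n=\mathbb{J}_n T^{-\top}$, which exhibit $\mathbb{J}_nP$ and $Q\mathbb{J}_n$ as simultaneous similarity transforms of the commuting block matrices ${\rm diag}(\sigma_{P,i}\mathbb{J})$ and ${\rm diag}(\sigma_{Q,i}\mathbb{J})$, checks out. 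The ``if'' direction correctly reduces, after a Williamson normalization of $P$, to computing the centraliser of ${\rm diag}(\sigma_{P,i}\mathbb{J})$ (Sylvester's theorem kills the off-diagonal blocks between distinct $\sigma_{P,i}>0$, and symmetry of $Q_1$ kills the $\mathbb{J}$-components on the diagonal), and on each group of coincident $\sigma_{P,i}$ to diagonalising a Hermitian matrix by a unitary whose realification lies in $O(2k)\cap Sp(2k,\mathbb{R})$ and hence preserves $\sigma I_{2k}$. The block structure you derive for $Q_1$ is precisely what the present paper rederives, by a different computation, in Theorem~\ref{thm:quasi-balanced-characterization1}.

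The one genuine soft spot is the semidefinite case, which you defer to a closing remark. Be aware that it is not just a technicality: Williamson's normal form ${\rm diag}(\sigma_i I_2)$ does not exist for every positive semidefinite matrix, and the commutator hypothesis does not rescue the statement. For $n=1$, $P={\rm diag}(1,0)$ and $Q={\rm diag}(0,1)$ satisfy $[\mathbb{J}_1P,Q\mathbb{J}_1]=0$, yet no symplectic $T$ gives $TPT^\top=\sigma I_2$ (the determinant forces $\sigma=0$ while $TPT^\top$ has rank one). So the theorem as literally stated with $\sigma_{X,i}\ge 0$ implicitly requires $P>0$ (and your sketch of ``restricting to ${\rm ran}\,P$'' would not repair this). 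In the setting where the paper invokes the theorem this is harmless --- physical realizability forces $P+\imath\mathbb{J}_n\ge 0$, hence $P>0$ for any asymptotically stable realization, and your centraliser argument handles merely semidefinite $Q$ without further work --- but you should state the positivity assumption on $P$ explicitly rather than treat it as a removable degeneracy.
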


In the remainder of the paper, we will say that a linear quantum stochastic system has a quasi-balanced realization or the system is quasi-balanceable if there exists a real symplectic transformation matrix $T$ such that the transformed system $\tilde{G}=(TAT^{-1},TB,CT^{-1},D)$ has diagonal controllability and observability Gramians of the special form described in Theorem \ref{thm:quasi-balanced}.
In the sequel, we will present some new results which provide further characterization of quasi-balanceable systems.
However, before establishing the new results, let us present a useful proposition.
First, in this paper we say that a matrix $A \in \mathbb{R}^{2n \times 2n}$ is skew-Hamiltonian if $A^\top \mathbb{J}_n = \mathbb{J}_n A$. 

\begin{proposition} \label{prop:skew-Hamiltonian}
A real symmetric matrix $A = A^\top  \in \mathbb{R}^{2n \times 2n}$ is skew-Hamiltonian if and only if $A$ has the block form
$A = [A_{jk}]_{j,k = 1,2,\ldots,n}$ with
\begin{equation}
 A_{jk}
 = \left\{\begin{array}{cl}
    \hat{a}_{jk} I_{2}; & \mbox{if } j = k \\
    \left[\begin{array}{rr}
     a_{jk,1} & a_{jk,2} \\ -a_{jk,2} & a_{jk,1}
    \end{array}\right] & \mbox{otherwise}
   \end{array}\right.
 \label{eq:skew-Hamiltonian}
\end{equation}
where $\hat{a}_{jk}, a_{jk,1}, a_{jk,2} \in \mathbb{R}$ for all $j,k = 1,2,\ldots,n$
with $a_{jk,1} = a_{kj,1}$ and $a_{jk,2} = -a_{kj,2}$.
\end{proposition}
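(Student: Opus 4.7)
The plan is to exploit the symmetry assumption to simplify the skew-Hamiltonian condition and then analyze the $2\times 2$ blocks of $A$ individually. Since $A = A^\top$, the defining identity $A^\top \mathbb{J}_n = \mathbb{J}_n A$ reduces to the commutation relation $A\mathbb{J}_n = \mathbb{J}_n A$. Because $\mathbb{J}_n = {\rm diag}_n(\mathbb{J})$ is block diagonal, partitioning $A$ as $A = [A_{jk}]_{j,k=1,\ldots,n}$ into $2\times 2$ blocks, this commutation relation is equivalent to $A_{jk}\mathbb{J} = \mathbb{J} A_{jk}$ for every pair $(j,k)$.

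Next, I would use the elementary fact that a real $2\times 2$ matrix $M$ commutes with $\mathbb{J}$ if and only if it has the form $M = \left[\begin{smallmatrix} \alpha & \beta \\ -\beta & \alpha\end{smallmatrix}\right]$ for some $\alpha,\beta \in \mathbb{R}$ (this is just the standard statement that the commutant of $\mathbb{J}$ is the $\mathbb{R}$-algebra of complex numbers acting on $\mathbb{R}^2$). Applied blockwise, this gives a free parameterization $A_{jk} = \left[\begin{smallmatrix} a_{jk,1} & a_{jk,2} \\ -a_{jk,2} & a_{jk,1}\end{smallmatrix}\right]$ for all $j,k$.

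I would then impose the global symmetry condition $A = A^\top$, which amounts to $A_{jk}^\top = A_{kj}$ for all $j,k$. For the diagonal blocks ($j=k$), this forces $a_{jj,2} = -a_{jj,2}$, hence $a_{jj,2} = 0$, so $A_{jj} = \hat a_{jj} I_2$ with $\hat a_{jj} := a_{jj,1}$. For the off-diagonal blocks, transposing the parameterization and comparing with $A_{kj}$ gives precisely $a_{jk,1} = a_{kj,1}$ and $a_{jk,2} = -a_{kj,2}$, which are the stated relations. The converse direction is a routine verification: any matrix of the prescribed block form is manifestly symmetric, and each block commutes with $\mathbb{J}$, so $A$ commutes with $\mathbb{J}_n$, making it skew-Hamiltonian.

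No serious obstacle is expected here; the proposition is essentially a bookkeeping statement about how the complex structure $\mathbb{J}_n$ interacts with symmetry on $\mathbb{R}^{2n}$. The only thing to be careful about is keeping the index conventions consistent between the diagonal and off-diagonal cases, and observing that the free real parameters per block drop from two to one on the diagonal (accounting for the scalar form $\hat a_{jk} I_2$).
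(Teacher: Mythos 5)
Your proof is correct and is essentially the same argument the paper intends: the paper's proof is a one-line ``direct inspection'' of $A\mathbb{J}_n = \mathbb{J}_n A^\top$, and your write-up simply carries out that inspection in an organized way (reducing to $A\mathbb{J}_n=\mathbb{J}_n A$ via symmetry, identifying the commutant of $\mathbb{J}$ blockwise, then imposing $A_{jk}^\top=A_{kj}$). No gaps; the diagonal-block degeneration to $\hat a_{jj}I_2$ and the off-diagonal relations come out exactly as stated.
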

\begin{proof}
The proposition follows from direct inspection of $A\mathbb{J}_n = \mathbb{J}_n A^\top$ 
(i.e. the definition of skew-Hamiltonian matrix).
\end{proof}

\subsection{Completely passive systems}

In \cite{Nurd13b}, it has been shown that the class of completely passive linear quantum stochastic systems are quasi-balanceable.
Completely passive systems are those which can be synthesized using only passive components; for example in quantum optics these components include optical cavities and beam splitters. 
Some characterization of completely passive systems have been presented in \cite{Nurd10b}.
In particular, it has been shown that, for a completely passive system, 
the Hamiltonian matrix $R$ is of the form \eqref{eq:skew-Hamiltonian} 
(that is, by Proposition \ref{prop:skew-Hamiltonian}, the matrix is skew-Hamiltonian),
the coupling matrix $K$ has the special form
\begin{equation}
 K = \left[\begin{array}{ccccccc} M_1 & \imath M_1 & M_2 & \imath M_2 & \cdots & M_n & \imath M_n \end{array} \right]
 \label{eq:K-passive}
\end{equation}
where $M_j \in \mathbb{C}^m$ is a column vector for all $j = 1,2,\ldots,n$,
and the matrix $D$ is unitary symplectic or there exists a real matrix $E \in \mathbb{R}^{(2m-n_y)\times 2m}$ such that 
$[D^\top \ E^\top]^\top$ is unitary symplectic (the latter is when only a subset of quadrature pairs of outputs is considered \cite[Section 2.3]{Nurd13b}). 
We now present results characterizing completely passive systems in terms of the controllability Gramian $P$.
The following partial characterization of completely passive systems in terms of $P$ was established in \cite{Nurd13b}.

\begin{lemma} \cite[Theorem 13]{Nurd13b} \label{lemma:cp-1}
For any completely passive linear quantum stochastic system that is asymptotically stable (i.e. the matrix $A$ is Hurwitz), $P = I$.
\end{lemma}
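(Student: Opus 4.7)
My plan is to verify directly that the identity matrix $I$ satisfies the controllability Lyapunov equation \eqref{eq:lyap-P}, i.e.\ that $A + A^\top + BB^\top = 0$ holds for every completely passive asymptotically stable system. Since $A$ is Hurwitz the Lyapunov equation has a unique solution, so this verification is enough.

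The key reduction uses the first physical realizability identity \eqref{eq:pr-1}, namely $A\mathbb{J}_n + \mathbb{J}_n A^\top + B\mathbb{J}_m B^\top = 0$. If one establishes the two commutation relations $A\mathbb{J}_n = \mathbb{J}_n A$ and $B\mathbb{J}_m = \mathbb{J}_n B$, then substituting them into \eqref{eq:pr-1} and left-multiplying by $-\mathbb{J}_n$ yields $A + A^\top + BB^\top = 0$ immediately. Everything therefore reduces to verifying those two commutation relations.

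For the first commutation I would write $A = 2\mathbb{J}_n(R+\Im\{K^\dag K\})$ and show that $R+\Im\{K^\dag K\}$ commutes with $\mathbb{J}_n$. Since $R$ is symmetric and skew-Hamiltonian, Proposition~\ref{prop:skew-Hamiltonian} asserts that every $2\times 2$ block of $R$ is a real linear combination of $I_2$ and $\mathbb{J}$, and any such block commutes with $\mathbb{J}$; hence $R\mathbb{J}_n=\mathbb{J}_n R$. A short block-wise computation using the passive form \eqref{eq:K-passive} of $K$ shows that each $2\times 2$ block of $\Im\{K^\dag K\}$ likewise has the form $bI_2+a\mathbb{J}$, and so $\Im\{K^\dag K\}$ commutes with $\mathbb{J}_n$ as well. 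Combining these gives the commutation of $A$.

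For the second commutation I would use \eqref{eq:pr-2} to express $B = \mathbb{J}_n C^\top \mathbb{J}_m D$ when $D$ is unitary symplectic. Being both unitary and symplectic forces $D\mathbb{J}_m = \mathbb{J}_m D$, which reduces the desired identity $B\mathbb{J}_m = \mathbb{J}_n B$ to $C\mathbb{J}_n = \mathbb{J}_m C$. The latter follows by translating $K$ into the real quadrature matrix $C$: every $2\times 2$ block of $C$ again takes the form $\alpha I_2+\beta\mathbb{J}$ and so commutes with $\mathbb{J}$. The case $n_y<2m$, in which $D$ is only a submatrix of a unitary symplectic $[D^\top\ E^\top]^\top$, can be absorbed by augmenting the outputs to a system for which this argument applies directly; since $A$ and $B$ (and therefore $P$) are unchanged by the augmentation, the conclusion transfers back. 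The main obstacle I anticipate is the careful block-wise bookkeeping needed to link the complex parameters $(R,K,S)$ with the real quadrature matrices $B$ and $C$, particularly in tracking the factor-of-two scaling inherent in the definitions of $w$ and $y$.
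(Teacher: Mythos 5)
Your argument is correct, and it is worth noting at the outset that the paper itself gives no proof of this lemma --- it is imported verbatim as Theorem~13 of \cite{Nurd13b} --- so there is no in-paper proof to match against; the closest analogue is the proof of the converse, Lemma~\ref{lemma:cp-2}, which works entry-by-entry with the Lyapunov equation and the identities $BB^\top = 4\mathbb{J}_n\Re\{K^\dag K\}\mathbb{J}_n^\top$ and $\mathbb{J}_n\Im\{K^\dag K\}=\Im\{K^\dag K\}\mathbb{J}_n=-\mathbb{J}_n\Re\{K^\dag K\}\mathbb{J}_n^\top$. Your route is structurally cleaner: reducing everything to the two intertwining relations $A\mathbb{J}_n=\mathbb{J}_nA$ and $B\mathbb{J}_m=\mathbb{J}_nB$ and then reading $A+A^\top+BB^\top=0$ off \eqref{eq:pr-1} is exactly the statement that a completely passive system is, in quadrature coordinates, the real representation of a complex (annihilation-operator-only) system, and it has the pleasant side effect of sidestepping the factor-of-two bookkeeping you worried about, since \eqref{eq:pr-1} and \eqref{eq:pr-2} already carry the correct normalizations. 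All the individual claims check out: symmetry plus skew-Hamiltonicity of $R$ gives $R\mathbb{J}_n=\mathbb{J}_nR$; for $K$ of the form \eqref{eq:K-passive} the $(j,k)$ block of $K^\dag K$ is $M_j^\dag M_k\left[\begin{smallmatrix}1&\imath\\-\imath&1\end{smallmatrix}\right]$, whose imaginary part is $c_2I_2+c_1\mathbb{J}$ with $M_j^\dag M_k=c_1+\imath c_2$, so $\Im\{K^\dag K\}$ commutes with $\mathbb{J}_n$; the blocks of $C$ are $2\left[\begin{smallmatrix}\alpha&-\beta\\\beta&\alpha\end{smallmatrix}\right]$ with $M_{j\ell}=\alpha+\imath\beta$, which commute with $\mathbb{J}$, giving $C\mathbb{J}_n=\mathbb{J}_mC$ and hence, via \eqref{eq:pr-2} and $D\mathbb{J}_m=\mathbb{J}_mD$, the relation $B\mathbb{J}_m=\mathbb{J}_nB$; and the output augmentation for $n_y<2m$ is harmless because $P$ depends only on $(A,B)$. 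The only polish I would suggest is to note that the detour through $C$ and $D$ is optional: one can verify $B\mathbb{J}_m=\mathbb{J}_nB$ directly from $B=2\imath\mathbb{J}_n[\,-K^\dag S\ \ K^\top S^{\#}\,]\Gamma$ with $K$ passive, which keeps the proof entirely in terms of the $(S,K,R)$ parameters that define complete passivity.
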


Now, we show that the converse is also true.

\begin{lemma} \label{lemma:cp-2}
For any asymptotically stable linear quantum stochastic system $G=(A,B,C,D)$,
with a unitary symplectic $D$ or there exists a real matrix $E$ such that $[D^\top \ E^\top]^\top$ is unitary symplectic,
the controllability Gramian $P = I$ only if the system is completely passive.
\end{lemma}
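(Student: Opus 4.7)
The plan is to combine the Lyapunov equation (\ref{eq:lyap-P}) at $P=I$ with the physical realizability constraint (\ref{eq:pr-1}) into a single algebraic identity coupling $R$ and $K$, and then force the ``non-passive'' components of both $R$ and $K$ to vanish by decomposing everything into parts that commute or anti-commute with $\mathbb{J}_n$.

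First, a direct computation from $A=2\mathbb{J}_n(R+\Im\{K^{\dag}K\})$, $R=R^\top$ and $\mathbb{J}_n^\top=-\mathbb{J}_n$ gives $A+A^\top = 2[\mathbb{J}_n,R]+2\{\mathbb{J}_n,Y\}$, where $Y:=\Im\{K^{\dag}K\}$ and the brackets (resp.\ braces) denote the commutator (resp.\ anti-commutator). Similarly, the vacuum It\^o identity $dx\,dx^\top = B(I+\imath\mathbb{J}_m)B^\top dt$ combined with $B_0=2\imath\mathbb{J}_n[-K^{\dag}S\ \ K^\top S^{\#}]$ and $SS^{\dag}=I$ yields $BB^\top = -4\mathbb{J}_n X\mathbb{J}_n$ with $X:=\Re\{K^{\dag}K\}$. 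Plugging both into $A+A^\top+BB^\top=0$ delivers the central identity
\begin{equation*}
[\mathbb{J}_n, R] + \{\mathbb{J}_n, Y\} = 2\mathbb{J}_n X\mathbb{J}_n.
\end{equation*}

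Second, I would decompose $R=R_c+R_a$ and $K=K_c+K_a$, where $R_c$ commutes with $\mathbb{J}_n$ and $R_a=(R+\mathbb{J}_nR\mathbb{J}_n)/2$ anti-commutes with it, while $K_c:=(K+\imath K\mathbb{J}_n)/2$ satisfies $K_c\mathbb{J}_n = -\imath K_c$ and $K_a:=(K-\imath K\mathbb{J}_n)/2$ satisfies $K_a\mathbb{J}_n = +\imath K_a$. By Proposition~\ref{prop:skew-Hamiltonian}, $R$ is skew-Hamiltonian iff $R_a=0$, and a direct comparison with (\ref{eq:K-passive}) shows $K$ has the passive form iff $K_a=0$. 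A short calculation reveals that $K_c^{\dag}K_c$ and $K_a^{\dag}K_a$ both commute with $\mathbb{J}_n$, whereas the cross terms $K_c^{\dag}K_a$ and $K_a^{\dag}K_c$ anti-commute. Splitting each side of the central identity into its commuting and anti-commuting (with $\mathbb{J}_n$) parts gives two independent matrix equations; substituting the explicit expressions $X_c = \Re\{K_c^{\dag}K_c\} + \Re\{K_a^{\dag}K_a\}$ and $Y_c = (\Re\{K_c^{\dag}K_c\} - \Re\{K_a^{\dag}K_a\})\mathbb{J}_n$ into the commuting piece collapses it to the single identity $\Re\{K_a^{\dag}K_a\} = 0$.

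The crux of the argument---and the step I expect to be the main obstacle---is the passage from $\Re\{K_a^{\dag}K_a\} = 0$ back to $K_a = 0$. Setting $M:=K_a^{\dag}K_a$, the matrix $M$ is Hermitian positive semidefinite with $\Re M=0$, so $M=\imath N$ for some real antisymmetric $N$. Testing positivity against $v=u+\imath w$ for real $u,w\in\mathbb{R}^{2n}$ gives $v^{\dag}Mv = -2\,u^\top N w$, and the sign flip $w\mapsto -w$ combined with $v^{\dag}Mv\geq 0$ forces $u^\top N w=0$ for all real $u,w$; hence $N=0$ and therefore $K_a=0$. Once $K_a=0$, all cross terms vanish, so $X_a=0$, and the anti-commuting half of the central identity immediately yields $R_a=0$. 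Thus $R$ is skew-Hamiltonian and $K$ has the passive form (\ref{eq:K-passive}); together with the standing hypothesis on $D$, these are precisely the defining properties of a completely passive system recalled above. The extended case---in which $D$ is merely the top block of a unitary symplectic matrix---requires no modification since $P$ depends only on $A$ and $B$, and the argument never invokes the output equation (\ref{eq:pr-2}).
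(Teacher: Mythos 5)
Your proof is correct, and it reaches the paper's conclusion by the same overall route (use the Lyapunov equation at $P=I$ together with the physical-realizability parameterization of $A$ and $B$; deduce the passive form of $K$ first, then the skew-Hamiltonian property of $R$), but the bookkeeping is genuinely different and worth contrasting. The paper works entrywise: it writes out the $2\times 2$ diagonal blocks of $A+A^\top+BB^\top=0$, adds the two resulting scalar equations per block so that the $R$-dependence cancels, and obtains a sum of squares equal to zero, which forces $u_{\ell(2j)}=\imath u_{\ell(2j-1)}$; it then substitutes the passive form of $K$ back in to conclude $\mathbb{J}_nR=R\mathbb{J}_n$. You instead package the same identity coordinate-freely as $[\mathbb{J}_n,R]+\{\mathbb{J}_n,\Im\{K^\dagger K\}\}=2\mathbb{J}_n\Re\{K^\dagger K\}\mathbb{J}_n$ and project it onto the $\pm$ eigenspaces of conjugation by $\mathbb{J}_n$, using the splitting $K=K_c+K_a$ (your observation that \eqref{eq:K-passive} is equivalent to $K_a=0$ is correct and is a clean restatement of the passive form). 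The commuting piece collapses to $\Re\{K_a^\dagger K_a\}=0$, and your positive-semidefiniteness argument correctly yields $K_a=0$ (this could be shortened: $\operatorname{tr}(K_a^\dagger K_a)=\operatorname{tr}(\Re\{K_a^\dagger K_a\})=0$ already forces $K_a=0$; note also that the paper effectively uses only the diagonal of this matrix identity, which is why its sum-of-squares step works). The anticommuting piece then gives $\mathbb{J}_nR_a=X_a=0$ directly, which is a slightly slicker ending than the paper's resubstitution. What your version buys is transparency about \emph{why} the argument works --- the Lyapunov equation at $P=I$ has no component along the cross terms $K_c^\dagger K_a$, so only the ``diagonal'' quantities $\Re\{K_c^\dagger K_c\}$ and $\Re\{K_a^\dagger K_a\}$ and the anticommuting part of $R$ survive --- at the cost of having to verify the commutation properties of $K_c^\dagger K_c$, $K_a^\dagger K_a$ and the cross terms; the paper's version is more elementary but hides this structure in index manipulations. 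Your closing remark on the extended form of $D$ is also consistent with the paper: both proofs use the hypothesis on $D$ only to guarantee the $(S,K,R)$ parameterization of $A$ and $B$ and to certify that $D$ itself has the form required of a completely passive system.
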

\begin{proof}
Note that the required property on $D$  ensures that there is a  unitary matrix $S \in \mathbb{C}^{m \times m}$ such that the identities  $A=2\mathbb{J}_n(R+\Im\{K^{\dag}K\})$ and $B=2\imath \mathbb{J}_n [\begin{array}{cc}
-K^{\dag}S & K^\top S^{\#}\end{array}]\Gamma$, where $\Gamma$ is a complex matrix satisfying $\Gamma \Gamma^{\dag}=\frac{1}{2}I$ \cite{JNP06}, hold for some $R=R^{\top} \in \mathbb{R}^{2n \times 2n}$ and $K \in \mathbb{C}^{m \times 2n}$, when the system is physically realizable. 

Since $P = I$, the Lyapunov equation for the controllability Gramian becomes $A + A^\top + BB^\top = 0$
which gives that
\begin{equation}
 2(\mathbb{J}_n R - R\mathbb{J}_n) + 2(\mathbb{J}_n\Im\{K^\dagger K\} + \Im\{K^\dagger K\}\mathbb{J}_n)
 + BB^\top = 0. \label{eq:cp-2-lyap2}
\end{equation}
Let $r_{jk}$, $u_{jk}$, $v_{jk}$, and $b_{jk}$ denote the $jk$-th element of the matrix $R$, $K$, $\Im\{K^\dagger K\}$, and $BB^\top$, respectively.
By examining each $2 \times 2$ block of the above equation, we have that
\begin{align}
 4\left(r_{(2j-1)(2j)}-v_{(2j-1)(2j)}\right) + b_{(2j-1)(2j-1)} = 0 \label{eq:cp-2-diag-1} \\
 -4\left(r_{(2j-1)(2j)}+v_{(2j-1)(2j)}\right) + b_{(2j)(2j)} = 0 \label{eq:cp-2-diag-2}
\end{align}
for all $j = 1,2,\ldots,n$, where 
$v_{jk}$ and $b_{jk}$ are given by
\begin{align*}
 v_{(2j-1)(2j)} &= \sum_{\ell=1}^{m} \bigg( \Re\{u_{\ell (2j-1)}\}\Im\{u_{\ell (2j)}\} 
                   - \Re\{u_{\ell (2j)}\}\Im\{u_{\ell (2j-1)}\}\bigg) \\
 b_{(2j-1)(2j-1)} &= 4\sum_{\ell=1}^{m} \bigg( \Re\{u_{\ell (2j)}\}^2 + \Im\{u_{\ell (2j)}\}^2 \bigg) \\
 b_{(2j)(2j)} &= 4\sum_{\ell=1}^{m} \bigg( \Re\{u_{\ell (2j-1)}\}^2 + \Im\{u_{\ell (2j-1)}\}^2 \bigg).
\end{align*}
The second and third identities above follow from the identity 
$BB^{\top}=-2\mathbb{J}_n (K^{\dag}K+K^{\top}K^{\#})\mathbb{J}_n$ while the latter follows from the identity
$B=2\imath \mathbb{J}_n[\begin{array}{cc}-K^{\dag} S & K^{\top}S^{\#}\end{array}]\Gamma$.

By adding \eqref{eq:cp-2-diag-1} and \eqref{eq:cp-2-diag-2}, we have that
\begin{align}
 0 = \sum_{\ell=1}^{m} \left[ \bigg( \Re\{u_{\ell (2j-1)}\}  - \Im\{u_{\ell (2j)}\}  \bigg)^2 
     + \bigg( \Re\{u_{\ell (2j)}\} - \Im\{u_{\ell (2j-1)}\}\bigg)^2\right].
\end{align}
This implies that $u_{\ell (2j)} = iu_{\ell (2j-1)}$ for all $j \in [1,n]$ and all $\ell \in [1,m]$.
Hence, $K$ is of the form \eqref{eq:K-passive} i.e. $K$ is that of a completely passive system.

Since $K$ is of the form \eqref{eq:K-passive}, it follows that $\Re\{K\} = \Im\{K\}\mathbb{J}_n^\top$.
Now note that $\Re\{K^\dagger K\} = \Re\{K\}^\top \Re\{K\} + \Im\{K\}^\top \Im\{K\}$
and 
$\Im\{K^\dagger K\} = \Re\{K\}^\top \Im\{K\} - \Im\{K\}^\top \Re\{K\}$.
Thus, we have that 
$\mathbb{J}_n \Im\{K^\dagger K\} = \Im\{K^\dagger K\} \mathbb{J}_n$ \\ $= - \mathbb{J}_n \Re\{K^\dagger K\} \mathbb{J}_n^\top$.
Also, we highlight that $BB^\top = 4\mathbb{J}_n \Re\{K^\dagger K\} \mathbb{J}_n^\top$.
By substituting these into \eqref{eq:cp-2-lyap2}, we have that $\mathbb{J}_n R = R \mathbb{J}_n$.
Therefore, $R$ is skew-Hamiltonian i.e. $R$ is that of a completely passive system. Since $D$ is of the form required for a completely passive system, this establishes the lemma statement.
\end{proof}

It is noted that Lemma \ref{lemma:cp-2} completes the parameterization of completely passive linear quantum stochastic systems in terms of the controllability Gramian $P$.
Apart from providing further characterization for completely passive systems, 
the above results will be used in the sequel to provide further insight into quasi-balanceable systems. 

\subsection{Characterization of quasi-balanceable linear quantum \\ stochastic systems}

In this section, we present novel results that provide a more refined characterization of linear quantum stochastic systems possessing a quasi-balanced realization.
Firstly, from Lemmas \ref{lemma:cp-1} and \ref{lemma:cp-2}, we will show that the class of linear systems symplectically similar to the class of completely passive systems are quasi-balanceable.

\begin{theorem} \label{thm:quasi-transformed-passive}
Let $G = (A,B,C,D)$ be an asymptotically stable and completely passive linear quantum stochastic system\footnote{Note that for a completely passive system, asymptotic stability is equivalent to minimality \cite{GZ13}.}.
Then a linear quantum stochastic system of the form \\
 $\tilde{G} = (T_0AT_0^{-1},T_0B,CT_0^{-1},D)$, with $T_0$ a real symplectic matrix, is quasi-balanceable.
In this case, the quasi-balancing transformation  for the system $\tilde{G}$ is  $T = \tilde{T}T_0^{-1}$,
where $\tilde{T}$ is a unitary symplectic matrix (which can be obtained by applying Theorem 7 of \cite{Nurd13b}) such that 
$$
\tilde T^{-\top}Q \tilde T^{-1} = {\rm diag}(\sigma_{Q,1}I_2,\sigma_{Q,2}I_2,\ldots,\sigma_{Q,n}I_2),
$$
and $Q$ is the observability Gramian of the completely passive system $G$. 
\end{theorem}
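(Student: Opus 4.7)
The plan is to exploit Lemma \ref{lemma:cp-1} to pin down the controllability Gramian of the underlying passive system, and then to show that the ``right'' quasi-balancing transformation for $\tilde{G}$ is built by first undoing the symplectic scrambling $T_0$ and then diagonalizing the observability Gramian of $G$ via the unitary-symplectic transformation furnished by Theorem 7 of \cite{Nurd13b}.

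First, I would record what $T_0$ does to the Gramians. Since $G=(A,B,C,D)$ is asymptotically stable and completely passive, Lemma \ref{lemma:cp-1} gives that its controllability Gramian equals $I$. Let $Q$ denote its observability Gramian, the unique positive semidefinite solution of \eqref{eq:lyap-Q}. A direct similarity computation on the Lyapunov equations \eqref{eq:lyap-P}--\eqref{eq:lyap-Q} shows that the controllability and observability Gramians of $\tilde{G}=(T_0 A T_0^{-1},T_0 B,C T_0^{-1},D)$ are $\tilde{P}=T_0 T_0^{\top}$ and $\tilde{Q}=T_0^{-\top} Q T_0^{-1}$ respectively.

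Next, I invoke Theorem 7 of \cite{Nurd13b} to produce a unitary symplectic $\tilde{T}$ such that $\tilde{T}^{-\top} Q \tilde{T}^{-1} = \mathrm{diag}(\sigma_{Q,1}I_2,\ldots,\sigma_{Q,n}I_2)$, and set $T = \tilde{T} T_0^{-1}$. Since symplectic matrices form a group, $T$ is symplectic. I then verify the two diagonalizations required by Theorem \ref{thm:quasi-balanced}. For the controllability Gramian of $\tilde G$,
\begin{equation*}
T\tilde{P}T^{\top} = \tilde{T} T_0^{-1}(T_0 T_0^{\top}) T_0^{-\top}\tilde{T}^{\top} = \tilde{T}\tilde{T}^{\top} = I,
\end{equation*}
using the fact that $\tilde{T}$ is unitary, so this is trivially of the form $\mathrm{diag}(I_2,\ldots,I_2)$ with all symplectic eigenvalues equal to one. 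For the observability Gramian,
\begin{equation*}
T^{-\top}\tilde{Q}T^{-1} = \tilde{T}^{-\top} T_0^{\top} (T_0^{-\top} Q T_0^{-1}) T_0 \tilde{T}^{-1} = \tilde{T}^{-\top} Q \tilde{T}^{-1},
\end{equation*}
which is in the required block form by the choice of $\tilde{T}$. Therefore $\tilde{G}$ admits a quasi-balanced realization with transformation $T=\tilde{T}T_0^{-1}$.

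There is essentially no obstacle once the right decomposition $T=\tilde{T}T_0^{-1}$ is proposed: the argument is just a change-of-basis bookkeeping on top of Lemma \ref{lemma:cp-1} and Theorem 7 of \cite{Nurd13b}. The only slightly subtle point is recognizing \emph{a priori} that one should separate the task into ``undo $T_0$'' followed by ``diagonalize $Q$'' -- this is what lets the observability diagonalization collapse to $\tilde T^{-\top} Q \tilde T^{-1}$ and, simultaneously, exploits $P = I$ so that $\tilde T$ being merely unitary (not a general symplectic) is enough to leave the controllability Gramian diagonal. No appeal to the commutator characterization $[\mathbb{J}_n P,Q\mathbb{J}_n]=0$ is needed, since a valid quasi-balancing transformation is constructed explicitly.
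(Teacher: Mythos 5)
Your proof is correct and takes essentially the same route as the paper's: undo $T_0$ to return to the passive system $G$, exploit $P=I$ from Lemma \ref{lemma:cp-1} so that the unitary $\tilde{T}$ leaves the controllability Gramian equal to the identity, and diagonalize $Q$ via the unitary symplectic matrix supplied by Theorem 7 of \cite{Nurd13b}. The paper compresses this verification into a citation of Theorem 3 of \cite{Nurd13b}, whereas you carry out the Gramian change-of-basis bookkeeping explicitly, which makes the argument self-contained but is not a different method.
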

\begin{proof}
First, we note that the system $\tilde{G}$ can be symplectically transformed into its symplectically similar completely passive system $G$ via the transformation matrix $T_0^{-1}$.
The result of this theorem then follows from \cite[Theorem 3]{Nurd13b}.
\end{proof}

The above theorem shows that any system which is symplectically similar to a completely passive linear quantum stochastic system  
is quasi-balanceable. 
Therefore, this result implies that the class of quasi-balanceable systems is larger than the class of completely passive systems as established in \cite{Nurd13b}. 


Let us now present the useful definition of a $\Sigma$-unitary matrix (as a generalization of the notion of a unitary matrix)
before presenting further characterization of quasi-balanceable systems.
 
\begin{definition}
For any diagonal matrix $\Sigma > 0$ of the form 
$\Sigma = {\rm diag}(\sigma_{1}I_2,\sigma_{2}I_2,\ldots,\sigma_{n}I_2)$,
a matrix $T$ is said to be $\Sigma$-unitary if
\begin{equation}
 T\Sigma T^\top = \Sigma \quad \mbox{or, equivalently,} \quad T^\top \Sigma T = \Sigma.
\end{equation}
\end{definition}

\begin{theorem} \label{thm:quasi-balanced-characterization1}
For any asymptotically stable linear quantum stochastic system $G = (A,B,C,D)$ 
with a diagonal controllability Gramian 
$\Sigma_P = {\rm diag}(\sigma_{P,1}I_2,\sigma_{P,2}I_2,\ldots,\sigma_{P,n}I_2) > 0$,
the system has a quasi-balanced realization if and only if $Q$ has a block form $Q = [Q_{jk}]_{j,k = 1,2,\ldots,n}$ with
\begin{equation}
 Q_{jk}
 = \left\{\begin{array}{cl}
    \hat{q}_{jk} I_{2}; & \mbox{if } j = k \\
    \delta_{\sigma_{P,j} \sigma_{P,k}}
    \left[\begin{array}{rr}
     q_{jk,1} & q_{jk,2} \\ -q_{jk,2} & q_{jk,1}
    \end{array}\right] & \mbox{otherwise}
   \end{array}\right.
 \label{eq:Q-form}
\end{equation}
where $\hat{q}_{jk}, q_{jk,1}, q_{jk,2} \in \mathbb{R}$ for all $j,k = 1,2,\ldots,n$
with $q_{jk,1} = q_{kj,1}$ and $q_{jk,2} = -q_{kj,2}$.
Here, $\delta_{ab}$ is a generalized Kronecker delta defined as
\begin{equation*}
 \delta_{ab} 
 \triangleq \left\{\begin{array}{cl}
    1; & \mbox{if } a = b \\
    0; & \mbox{otherwise}
   \end{array}\right..
\end{equation*} 
\end{theorem}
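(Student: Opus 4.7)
The starting point is Theorem \ref{thm:quasi-balanced}, which says $G$ is quasi-balanceable if and only if $[\mathbb{J}_n P, Q\mathbb{J}_n]=0$. Since we are given $P=\Sigma_P$ with scalar $2\times 2$ blocks $\sigma_{P,j}I_2$, the matrix $\Sigma_P$ commutes with $\mathbb{J}_n = I_n\otimes\mathbb{J}$. Using this and $\mathbb{J}_n^2 = -I_{2n}$, I would expand
\begin{equation*}
[\mathbb{J}_n \Sigma_P,\, Q\mathbb{J}_n] \;=\; \Sigma_P\mathbb{J}_n Q\mathbb{J}_n + Q\Sigma_P.
\end{equation*}
Writing $Q = [Q_{jk}]$ in $2\times 2$ blocks, the $(j,k)$-block of this equation reads $\sigma_{P,j}\mathbb{J}Q_{jk}\mathbb{J} + \sigma_{P,k}Q_{jk} = 0$, which, after multiplying by $\mathbb{J}^{-1}=-\mathbb{J}$ on the right, is equivalent to
\begin{equation*}
\mathbb{J}\, Q_{jk}\, \mathbb{J}^{-1} \;=\; \frac{\sigma_{P,k}}{\sigma_{P,j}}\, Q_{jk}.
\end{equation*}
So the commutator condition decouples into an eigenvalue equation on each $2\times 2$ block, with eigenvalue equal to the ratio $\sigma_{P,k}/\sigma_{P,j}>0$.

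Next I would analyze the conjugation map $\phi: X\mapsto \mathbb{J}X\mathbb{J}^{-1}$ on $\mathbb{R}^{2\times 2}$. A direct calculation with $X=\begin{bmatrix}a&b\\c&d\end{bmatrix}$ gives $\phi(X)=\begin{bmatrix}d&-c\\-b&a\end{bmatrix}$, and from $\phi(X)=\lambda X$ one reads off $\lambda\in\{\pm 1\}$; the $+1$-eigenspace consists precisely of the ``complex-like'' matrices $\begin{bmatrix}a&b\\-b&a\end{bmatrix}$, while the $-1$-eigenspace consists of the traceless symmetric matrices $\begin{bmatrix}a&b\\b&-a\end{bmatrix}$. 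Since $\sigma_{P,k}/\sigma_{P,j}>0$, the only admissible eigenvalue is $+1$; in other words, either $\sigma_{P,j}=\sigma_{P,k}$ and $Q_{jk}$ lies in the $+1$-eigenspace of $\phi$, or $Q_{jk}=0$. This already gives the off-diagonal part of the claimed block form, with the generalized Kronecker delta $\delta_{\sigma_{P,j}\sigma_{P,k}}$ enforcing the vanishing in the mismatched case.

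For the diagonal blocks $Q_{jj}$ the ratio is automatically $1$, so $Q_{jj}=\begin{bmatrix}q_{jj,1}&q_{jj,2}\\-q_{jj,2}&q_{jj,1}\end{bmatrix}$; invoking the symmetry $Q=Q^{\top}$ forces $q_{jj,2}=0$, hence $Q_{jj}=\hat q_{jj}I_2$. Applying $Q_{kj}=Q_{jk}^{\top}$ to the off-diagonal pieces yields exactly the relations $q_{jk,1}=q_{kj,1}$ and $q_{jk,2}=-q_{kj,2}$ appearing in the statement. Conversely, any $Q$ of the block form \eqref{eq:Q-form} satisfies $\phi(Q_{jk}) = Q_{jk}$ on every block where $\sigma_{P,j}=\sigma_{P,k}$ (and trivially on the zero blocks), so the commutator vanishes and Theorem \ref{thm:quasi-balanced} delivers quasi-balanceability. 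The reasoning is almost entirely linear-algebraic; the one spot that deserves some care is checking that the eigenvalue $\sigma_{P,k}/\sigma_{P,j}$ can only equal the \emph{positive} eigenvalue of $\phi$, which is why the $-1$-eigenspace never appears and why the off-diagonal blocks acquire the clean complex-like structure rather than a more general symmetric form.
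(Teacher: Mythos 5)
Your proof is correct, and it takes a cleaner, more unified route than the paper's. The paper splits the argument: for the ``if'' direction it invokes Proposition~\ref{prop:skew-Hamiltonian} to note that $\Sigma_P$ and a $Q$ of the form \eqref{eq:Q-form} are both skew-Hamiltonian and commute, whence $[\mathbb{J}_n\Sigma_P,Q\mathbb{J}_n]=0$; for the ``only if'' direction it takes the quasi-balancing transformation $T$, observes it must be symplectic and $\Sigma_P$-unitary, and pushes $T$ through $T^{-\top}QT^{-1}=\Sigma_Q$ to arrive at the matrix identity $\mathbb{J}_n\Sigma_P^{-1}Q\Sigma_P\mathbb{J}_n=-Q$, which it then inspects block by block. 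You instead work directly and bidirectionally with the commutator criterion of Theorem~\ref{thm:quasi-balanced}: using $[\mathbb{J}_n,\Sigma_P]=0$ and $\mathbb{J}_n^2=-I$ you reduce it to the single blockwise equation $\sigma_{P,j}\mathbb{J}Q_{jk}\mathbb{J}+\sigma_{P,k}Q_{jk}=0$, i.e.\ an eigenvalue equation $\phi(Q_{jk})=(\sigma_{P,k}/\sigma_{P,j})Q_{jk}$ for the involution $\phi(X)=\mathbb{J}X\mathbb{J}^{-1}$, whose only eigenvalues are $\pm1$; positivity of the ratio then forces either $\sigma_{P,j}=\sigma_{P,k}$ with $Q_{jk}$ in the ``complex-like'' $+1$-eigenspace, or $Q_{jk}=0$, and symmetry of $Q$ supplies the remaining relations. (Your block equation is the paper's identity with the ratio inverted, which is immaterial since the only admissible value is $1$.) What your version buys is that both implications fall out of one chain of equivalences, with no need for the auxiliary $\Sigma_P$-unitary matrix or the skew-Hamiltonian proposition, and the appearance of the generalized Kronecker delta is explained structurally by the spectrum of $\phi$; what the paper's version buys is that the $\Sigma_P$-unitary machinery it sets up is reused immediately in Corollary~\ref{cor:quasi-balanced-characterization2}.
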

\begin{proof}
For the ``if'' part, suppose that $Q$ has the form \eqref{eq:Q-form}.
Using Proposition \ref{prop:skew-Hamiltonian}, we have that $\Sigma_P$ and $Q$ are skew-Hamiltonian.
Also, from direct inspection, $\Sigma_P Q = Q \Sigma_P$ (i.e. the two matrices commute with each other).
Then we have that
$[\mathbb{J}_n \Sigma_P, Q\mathbb{J}_n] = 0$.
The ``if'' part of the result then follows from Theorem \ref{thm:quasi-balanced}.

For the ``only if'' part, suppose that the system has a quasi-balanced realization.
From Theorem \ref{thm:quasi-balanced}, this implies that there exists a symplectic $\Sigma_P$-unitary matrix $T$ such that
$T^{-\top} Q T^{-1} = \Sigma_Q$
where $\Sigma_Q = {\rm diag}(\sigma_{Q,1}I_2,\sigma_{Q,2}I_2,\ldots,\sigma_{Q,n}I_2) \geq 0$.
Moreover, from the definition of a $\Sigma_P$-unitary matrix, we have that $T^{-1}\Sigma_P T^{-\top} = \Sigma_P$,
$T^{-1}\Sigma_P^{-1} T^{-\top} = \Sigma_P^{-1}$.
Thus we can write
\begin{align}
 T^{-\top} Q T^{-1} &= \Sigma_Q \nonumber \\
 (T^\top \mathbb{J}_n \Sigma_P^{-1}) T^{-\top} Q T^{-1} (\Sigma_P \mathbb{J}_n T)
   &= (T^\top \mathbb{J}_n \Sigma_P^{-1}) \Sigma_Q (\Sigma_P \mathbb{J}_n T) \nonumber \\
 T^\top \mathbb{J}_n T T^{-1} \Sigma_P^{-1} T^{-\top} Q T^{-1} \Sigma_P T^{-\top} T^\top \mathbb{J}_n T 
    &= -T^\top \Sigma_Q T \nonumber \\
 \mathbb{J}_n \Sigma_P^{-1} Q \Sigma_P \mathbb{J}_n &= -Q.
\end{align}

We now examine each $2 \times 2$ block of $\mathbb{J}_n \Sigma_P^{-1} Q \Sigma_P \mathbb{J}_n = -Q$.
Let $q_{jk}$ denote the $jk$-th element of the matrix $Q$.
Examining the diagonal blocks, we have for all $j = 1,2,\ldots,n$ that
\begin{equation}
 \left[\begin{array}{cc}
  q_{(2j-1)(2j-1)} & q_{(2j-1)(2j)} \\
  q_{(2j-1)(2j)} & q_{(2j)(2j)}
 \end{array}\right]
 =
 \left[\begin{array}{cc}
  q_{(2j)(2j)} & -q_{(2j-1)(2j)} \\
  -q_{(2j-1)(2j)} & q_{(2j-1)(2j-1)}
 \end{array}\right].
\end{equation}
This implies that $q_{(2j-1)(2j-1)} = q_{(2j)(2j)}$ and $q_{(2j-1)(2j)} = 0$ for all $j = 1,2,\ldots,n$.
This establishes \eqref{eq:Q-form} when $j = k$.

Now, examining the off-diagonal blocks of $\mathbb{J}_n \Sigma_P^{-1} Q \Sigma_P \mathbb{J}_n = -Q$, 
we have for all $j,k = 1,2,\ldots,n$ that
\begin{equation}
 \left[\hspace{-0.2em}\begin{array}{cc}
  q_{(2j-1)(2k-1)} & q_{(2j-1)(2k)} \\
  q_{(2j)(2k-1)} & q_{(2j)(2k)}
 \end{array}\hspace{-0.2em}\right]
 =
 \frac{\sigma_{P,k}}{\sigma_{P,j}}
 \left[\hspace{-0.2em}\begin{array}{cc}
  q_{(2j)(2k)} & -q_{(2j)(2k-1)} \\
  -q_{(2j-1)(2k)} & q_{(2j-1)(2k-1)}
 \end{array}\hspace{-0.2em}\right].
\end{equation}
From the above equation, it follows that all elements of the matrix of the left hand side is $0$ if $\sigma_{P,j} \neq \sigma_{P,k}$.
When $\sigma_{P,j} = \sigma_{P,k}$, it follows from the above equation that
$q_{(2j)(2k)} = q_{(2j-1)(2k-1)}$ and $q_{(2j)(2k-1)} = -q_{(2j-1)(2k)}$.
This establishes \eqref{eq:Q-form} when $j \neq k$.
\end{proof}

We note that a similar result to Theorem \ref{thm:quasi-balanced-characterization1} can be obtained for an asymptotically stable system with diagonal observability Gramian $\Sigma_Q = {\rm diag}(\sigma_{Q,1}I_2,\sigma_{Q,2}I_2,\ldots,\sigma_{Q,n}I_2)>0$.
That is, such system has a quasi-balanced realization if and only if the controllability Gramian $P$ is of the form \eqref{eq:Q-form} (in this case $P$ can be allowed to be positive semidefinite). 

\begin{corollary} \label{cor:quasi-balanced-characterization2}
An asymptotically stable system $G = (A,B,C,D)$, with controllability Gramian $P = P^\top  > 0$ 
and observability Gramian $Q = Q^\top  \geq 0$,
has a quasi-balanced realization if and only if 
$\tilde{Q} = T_0^{-\top} Q T_0^{-1}$ has the block form \eqref{eq:Q-form},
where $T_0$ is a symplectic matrix such that $T_0 P T_0^\top = \Sigma_P$
and \\ $\Sigma_P = {\rm diag}(\sigma_{P,1}I_2,\sigma_{P,2}I_2,\ldots,\sigma_{P,n}I_2)$.
\end{corollary}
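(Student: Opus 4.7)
The plan is to reduce the corollary to Theorem \ref{thm:quasi-balanced-characterization1} by performing a preliminary symplectic change of coordinates that brings the controllability Gramian into Williamson normal form. Since $P = P^\top > 0$, Williamson's theorem (or equivalently Theorem \ref{thm:quasi-balanced} applied with the observability Gramian replaced by the zero matrix, for which the commutator condition $[\mathbb{J}_n P, 0] = 0$ holds trivially) guarantees the existence of a symplectic $T_0$ with $T_0 P T_0^\top = \Sigma_P$, where all $\sigma_{P,j}$ are strictly positive because $P$ is.

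Next I would form the symplectically similar realization $\tilde{G} = (T_0 A T_0^{-1}, T_0 B, C T_0^{-1}, D)$. Substituting directly into the Lyapunov equations \eqref{eq:lyap-P} and \eqref{eq:lyap-Q} shows that $\tilde{G}$ has controllability Gramian $T_0 P T_0^\top = \Sigma_P$ and observability Gramian $\tilde{Q} = T_0^{-\top} Q T_0^{-1}$. The key observation at this stage is that $G$ is quasi-balanceable if and only if $\tilde{G}$ is quasi-balanceable: since the real symplectic matrices form a group under multiplication, any symplectic $T$ that quasi-balances $\tilde{G}$ yields the symplectic quasi-balancing transformation $T T_0$ for $G$, and conversely any symplectic $T$ that quasi-balances $G$ yields $T T_0^{-1}$ for $\tilde{G}$, with the Gramians of the resulting realizations manifestly coinciding.

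With this equivalence established, $\tilde{G}$ is an asymptotically stable linear quantum stochastic system with diagonal, strictly positive definite controllability Gramian $\Sigma_P$, so Theorem \ref{thm:quasi-balanced-characterization1} applies verbatim and tells us that $\tilde{G}$ is quasi-balanceable if and only if $\tilde{Q}$ has the block form \eqref{eq:Q-form}. Chaining this with the $G \leftrightarrow \tilde{G}$ equivalence yields exactly the corollary's statement. There is no substantive obstacle here: the argument is a clean change-of-variables reduction, and the only point meriting care is verifying that quasi-balanceability is invariant under symplectic similarity, which follows immediately from the group property of the symplectic group together with the transformation law for the two Gramians under similarity.
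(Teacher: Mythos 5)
Your proposal is correct and follows essentially the same route as the paper: both reduce the corollary to Theorem \ref{thm:quasi-balanced-characterization1} by first applying a symplectic Williamson transformation $T_0$ that brings $P$ to $\Sigma_P$, and then using the fact that the quasi-balancing transformation for $G$ and that for the transformed system differ by the symplectic factor $T_0$. Your packaging of the ``only if'' direction as invariance of quasi-balanceability under symplectic similarity is just a cleaner restatement of the paper's factorization $T=\tilde{T}T_0$ with $\tilde{T}$ a symplectic $\Sigma_P$-unitary matrix.
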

\begin{proof}
The ``if'' part follows from Theorem \ref{thm:quasi-balanced-characterization1}.
For the ``only if'' part, suppose that the system has a quasi-balanced realization. 
Then there exists a symplectic matrix $T$ such that
$TPT^\top = \Sigma_P$ and $T^{-\top}QT^{-1} = \Sigma_Q = {\rm diag}(\sigma_{Q,1}I_2,\sigma_{Q,2}I_2,\ldots,\sigma_{Q,n}I_2)$.
Let $T_0$ be another symplectic matrix such that $T_0PT_0^\top = \Sigma_P$
and let $T = \tilde{T}T_0$, where $\tilde{T}$ is a symplectic $\Sigma_P$-unitary matrix. By Theorem \ref{thm:quasi-balanced-characterization1}, a symplectic $\Sigma_P$-unitary matrix $\tilde{T}$ exists only if $T_0^{-\top} Q T_0^{-1}$ has the form \eqref{eq:Q-form}. This establishes the ``only if'' part of the theorem statement.
\end{proof}

The above corollary provides further insight into the characteristics of quasi-balanceable systems. It includes as  special cases the completely passive linear quantum stochastic systems and systems of the form described in Theorem \ref{thm:quasi-transformed-passive}. However, the corollary goes beyond these two classes and we now present a theoretical example of a quasi-balanceable system which is not completely passive and also outside of the systems covered by Theorem \ref{thm:quasi-transformed-passive}. A treatment of a distinguished class of systems that meet the requirement of Theorem  \ref{thm:quasi-transformed-passive} will be given in the next section.

\begin{example}
Consider a $n$ degree of freedom, $2m$ input and $2p$ output linear quantum stochastic system with a Hamiltonian matrix $R$ of the passive form \eqref{eq:skew-Hamiltonian} and the coupling matrix $K = [K_1^\top \ K_2^\top]^\top$ where $K_1 \in \mathbb{C}^{p \times 2n}$ is of the passive form \eqref{eq:K-passive} and $K_2 \in \mathbb{C}^{(m-p) \times 2n}$ is a dispersive coupling of the form:
\begin{align*}
 K_2 = \left[\begin{array}{ccccccc}
        u_{11} & 0 & u_{12} & 0 & \cdots & u_{1n} & 0 \\
        0 & u_{11} & 0 & u_{12} & \cdots & 0 & u_{1n} \\
        \vdots & \vdots & \vdots & \vdots & \vdots & \vdots & \vdots \\
        u_{k1} & 0 & u_{k2} & 0 & \cdots & u_{kn} & 0 \\
        0 & u_{k1} & 0 & u_{k2} & \cdots & 0 & u_{kn}
       \end{array}\right]
\end{align*}
where $k = m-p$ and $u_{ij} \in \mathbb{C}$ for all $i = 1,2,\ldots,k$ and $j = 1,2,\ldots,n$.
In other words, this illustrative system is described by a completely passive Hamiltonian matrix $R$ driven by $m$ vacuum fields.
The $n$ oscillator modes of the system are {\it passively} coupled to the first $p$ fields and are dispersively coupled to the remaining $m-p$ other fields (i.e., each oscillator mode is coupled to only one quadrature of each of the last $m-p$ fields).
The outputs of the system are then taken from those first $p$ {\it passively} coupled fields.

For this system, it can be shown that the controllability Gramian $P$ is skew-Hamiltonian and the observability Gramian is $Q = I$.
By applying Theorem \ref{thm:quasi-balanced-characterization1}, the system has a quasi-balanced realization.
This example theoretically illustrates that the class of quasi-balanceable systems is strictly larger than the class of completely passive systems and its symplectic transformations.
\end{example}

\section{Application: A new complete parameterization of all linear quantum stochastic systems generating an arbitrary pure Gaussian steady state} \label{sec:Gaussian}

An important class of linear quantum stochastic systems, especially in quantum optics and quantum information,
is the class of Gaussian systems 
e.g. linear quantum stochastic systems initialized in a Gaussian state,
or asymptotically stable linear quantum stochastic systems that generate a Gaussian state of the internal oscillators in steady-state (the limit that $t \rightarrow \infty$), irrespective of whether or not the initial joint state of the oscillators is Gaussian. 
A Gaussian state can be completely characterized by its first and second order statistics.
Particularly, the second order statistics of a Gaussian state is described by a covariance matrix $P$ which contains important information such as purity of the state and its symplectic spectrum (that can be used to determine entanglement in bipartite systems) \cite{L05}.
For an asymptotically stable system (having a unique steady state) driven by bosonic fields in the vacuum state, 
the steady-state covariance matrix $P$ is the unique solution to the Lyapunov equation \eqref{eq:lyap-P}.

We now show that quasi-balanceable systems are closely connected to systems with a pure Gaussian steady state which play an important role in continuous-variable quantum information technologies; see \cite{KY12,Yama12} for example.
It has been shown that for a pure Gaussian state,
the covariance matrix (or the controllability matrix) can be written as (see, e.g., \cite{L05}) 
\begin{equation}
 P = TT^\top \label{eq:P-pure-state}
\end{equation} 
where $T$ is a symplectic matrix. 
A characterization of linear Gaussian systems with pure steady states have been presented in \cite{KY12,Yama12}.
Following the equation above, we now establish a key connection between this class of Gaussian systems with pure steady-state and the class of completely passive systems.
We will use $\nu_{P,1},\nu_{P,2},\ldots,\nu_{P,n}$ to denote the $n$ largest positive eigenvalues of $i\mathbb{J}_n P$.
These positive eigenvalues are referred to as {\it symplectic eigenvalues} of $P$ (where $P>0$).

\begin{lemma} \label{lemma:cp2pure}
Any asymptotically stable linear quantum stochastic system with a pure steady state has controllability 
Gramian $P$ with symplectic eigenvalues $\nu_{P,j} = 1$ for all $j = 1,2,\ldots,n$.
Moreover, if $G=(A,B,C,D)$ is an asymptotically stable and completely passive Gaussian system, 
then the transformed system $\tilde{G} = (TAT^{-1},TB,CT^{-1},D)$, with $T$ symplectic, has a pure Gaussian steady state.
Conversely, if $G=(A,B,C,D)$ is an asymptotically stable Gaussian system with a pure steady state, and $D$ is unitary symplectic or there exists a real matrix $E$ such that $[D^\top \ E^\top]^\top$ is unitary symplectic,
then there exists a symplectic matrix $T$ such that $\tilde{G} = (TAT^{-1},TB,CT^{-1},D)$ is a completely passive system.
\end{lemma}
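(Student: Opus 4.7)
The plan is to dispatch the three claims in order, relying on the equivalence $P=TT^{\top}$ with $T$ symplectic $\Longleftrightarrow$ pure Gaussian steady state (equation \eqref{eq:P-pure-state}) together with Lemmas \ref{lemma:cp-1} and \ref{lemma:cp-2}. Throughout, I will exploit the fact that a symplectic similarity preserves asymptotic stability (since eigenvalues of $A$ and $TAT^{-1}$ coincide) and maps controllability Gramians by congruence: if $G=(A,B,C,D)$ has Gramian $P$, then $\tilde G=(TAT^{-1},TB,CT^{-1},D)$ has Gramian $TPT^{\top}$.

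For the first assertion, I would note that a pure steady state forces $P=SS^{\top}$ for some symplectic $S$, so $P$ is symplectically congruent to the identity via $S^{-1}$. Since the symplectic eigenvalues (the positive eigenvalues of $\imath\mathbb{J}_n P$) are invariants under symplectic congruence, $P$ and $I$ share the same symplectic spectrum. Because $\imath\mathbb{J}_n I=\imath\mathbb{J}_n$ has eigenvalues $\pm 1$ each of multiplicity $n$, all $\nu_{P,j}=1$.

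For the second assertion, I apply Lemma \ref{lemma:cp-1} to obtain $P=I$ for the completely passive $G$. Then the controllability Gramian of $\tilde G$ is $\tilde P=T I T^{\top}=TT^{\top}$ with $T$ symplectic. By \eqref{eq:P-pure-state}, this is exactly the form of the covariance matrix of a pure Gaussian state, and the steady state of $\tilde G$ (which exists because asymptotic stability is preserved) is therefore pure.

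For the converse, I would start from the hypothesis that the steady state is pure, and hence by \eqref{eq:P-pure-state} write $P=T_0 T_0^{\top}$ with $T_0$ symplectic. I then set $T:=T_0^{-1}$, which is symplectic, and let $\tilde G=(TAT^{-1},TB,CT^{-1},D)$. Its controllability Gramian is $T_0^{-1} P T_0^{-\top}=T_0^{-1}(T_0 T_0^{\top})T_0^{-\top}=I$, and $\tilde G$ is asymptotically stable. Since $D$ satisfies the unitary-symplectic (or extendable) hypothesis of Lemma \ref{lemma:cp-2}, that lemma now yields that $\tilde G$ is completely passive, completing the proof. The only place a genuine check is needed is the invariance of the symplectic spectrum under symplectic congruence in step one; the remaining work is bookkeeping built entirely out of Lemmas \ref{lemma:cp-1}, \ref{lemma:cp-2} and the pure-state characterization \eqref{eq:P-pure-state}.
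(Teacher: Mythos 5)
Your proposal is correct and follows essentially the same route as the paper's own (much terser) proof: purity $\Leftrightarrow P=TT^{\top}$ with $T$ symplectic, Lemma \ref{lemma:cp-1} for the forward transformation, and Lemma \ref{lemma:cp-2} applied to the Gramian $T_0^{-1}PT_0^{-\top}=I$ for the converse. The only cosmetic difference is that the paper cites Williamson's theorem for the unit symplectic spectrum, whereas you derive it directly from the invariance of the symplectic eigenvalues under symplectic congruence.
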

\begin{proof}
From \eqref{eq:P-pure-state} and Williamson's Theorem \cite{Will36,PSL09}, it follows that the covariance matrix $P$, of any asymptotically stable Gaussian system with pure steady state, has symplectic eigenvalues $\nu_{P,j} = 1$ for all $j = 1,2,\ldots,n$.
Now, for the transformed system $\tilde{G}$, the controllability Gramian $\tilde{P} = TPT^\top$.
The lemma statement then follows from \eqref{eq:P-pure-state}, Lemma \ref{lemma:cp-1}, and Lemma \ref{lemma:cp-2}.
%
\end{proof}

We emphasize that the above lemma provides a new complete parameterization of linear quantum stochastic Gaussian systems with pure steady states that differs from those presented in \cite{KY12,Yama12}, when $D$ is of the stated form (i.e., the form of $D$ for a completely passive system).  The new parameterization is given in terms of a sympletic matrix $T$ and the parameters of a completely passive linear quantum stochastic system. Note that, by using different means, \cite[Section 3]{Yama12} already showed that a system with a pure Gaussian steady state can be symplectically transformed to a completely passive system and that its transfer function is all- pass. Here, we present a short and straightforward proof of this by exploiting a property of the controllability Gramian of completely passive systems given in Lemma \ref{lemma:cp-2}. The converse, that every completely passive system gives rise to a system with a pure Gaussian steady state via symplectic transformations, has not been established previously and it is shown here via Lemma \ref{lemma:cp2pure}. Thus, it was not recognized in \cite{Yama12} that the relationship between completely passive systems and systems with a pure Gaussian steady state gives rise to a simpler alternative complete parameterization of the latter class of systems. This parameterization provides an insight into the preparation of pure Gaussian steady states which play an important role in quantum information and computation \cite{Yama12}.


\begin{corollary} \label{cor:quasi-balanced-pureGaussian}
Any asymptotically stable Gaussian system $G = (A,B,C,D)$ with a pure steady state has a quasi-balanced realization.
Let $P>0$ and $Q>0$ be the controllability and observability Gramians of the system $G$, respectively.
Then the quasi-balanced transformation $T = T_Q T_P$ is a symplectic matrix where
$T_P$ is a symplectic matrix such that $T_P P T_P^\top = I$ and 
$T_Q$ is a unitary symplectic matrix such that $T_Q^{-\top} T_P^{-\top} Q T_P^{-1} T_Q^{-1} = \Sigma_Q$.
Here, $\Sigma_Q = {\rm diag}(\sigma_{Q,1}I_2,\sigma_{Q,2}I_2,\ldots,\sigma_{Q,n}I_2)$.
\end{corollary}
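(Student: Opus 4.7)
The plan is to reduce the problem to the setting already handled by Theorem \ref{thm:quasi-transformed-passive} (or rather to construct the quasi-balancing transformation directly), by exploiting the very strong structural information that Lemma \ref{lemma:cp2pure} gives about the controllability Gramian of a system with a pure Gaussian steady state.

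First I would invoke Lemma \ref{lemma:cp2pure}: since $G$ is asymptotically stable with a pure Gaussian steady state, all symplectic eigenvalues of $P$ satisfy $\nu_{P,j}=1$. By Williamson's theorem, this guarantees the existence of a symplectic matrix $T_P$ with $T_P P T_P^\top = I$, exactly as asserted in the corollary. Transforming $G$ by $T_P$ yields a symplectically similar system $\tilde G=(T_P A T_P^{-1},T_P B,CT_P^{-1},D)$ whose controllability Gramian is the identity and whose observability Gramian is $\tilde Q = T_P^{-\top} Q T_P^{-1}>0$.

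Next I would diagonalize $\tilde Q$ by a unitary symplectic congruence. By Theorem 7 of \cite{Nurd13b} (the same ingredient cited in Theorem \ref{thm:quasi-transformed-passive}), there exists a unitary symplectic matrix $T_Q$ such that
\begin{equation*}
T_Q^{-\top}\tilde Q T_Q^{-1}=\Sigma_Q={\rm diag}(\sigma_{Q,1}I_2,\sigma_{Q,2}I_2,\ldots,\sigma_{Q,n}I_2).
\end{equation*}
The key point is that a unitary symplectic matrix satisfies both $T_Q T_Q^\top=I$ and $T_Q\mathbb{J}_n T_Q^\top=\mathbb{J}_n$, so applying $T_Q$ to the already-identity controllability Gramian leaves it invariant: $T_Q\cdot I\cdot T_Q^\top=I$. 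Setting $T=T_Q T_P$, which is symplectic as a product of symplectic matrices, gives simultaneously $TPT^\top=T_Q(T_P P T_P^\top)T_Q^\top=T_Q T_Q^\top=I$ (of the required block form with all $\sigma_{P,j}=1$) and $T^{-\top}QT^{-1}=\Sigma_Q$, establishing the quasi-balanced realization.

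There is no substantive obstacle: the corollary is essentially a packaging of Lemma \ref{lemma:cp2pure}, Williamson's theorem, and the unitary-symplectic diagonalization result from \cite{Nurd13b}. The only point that needs a line of care is checking that $T_Q$ preserves the identity controllability Gramian, which is immediate from unitarity; everything else is bookkeeping about the composition $T=T_Q T_P$ and the fact that this composition automatically satisfies the hypotheses of Theorem \ref{thm:quasi-balanced}.
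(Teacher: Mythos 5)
Your construction and the paper's are the same: the paper's proof is a one-line appeal to Lemma \ref{lemma:cp2pure} and Theorem \ref{thm:quasi-transformed-passive}, and your argument is exactly that chain unpacked ($T_P$ from Williamson's theorem, then a unitary symplectic $T_Q$, then the observation that $T_Q$ preserves $P=I$). However, you have misplaced where the actual content lies, and as written one step does not go through. You assert that ``by Theorem 7 of \cite{Nurd13b}'' there exists a \emph{unitary} symplectic $T_Q$ with $T_Q^{-\top}\tilde{Q}T_Q^{-1}=\Sigma_Q$ for $\tilde{Q}=T_P^{-\top}QT_P^{-1}$. This is false for a generic positive definite $\tilde{Q}$: if $T_Q$ is unitary symplectic then $T_Q^{-\top}=T_Q$ and $T_Q\mathbb{J}_n=\mathbb{J}_nT_Q$, so $\tilde{Q}=T_Q^\top\Sigma_QT_Q$ forces $\tilde{Q}\mathbb{J}_n=\mathbb{J}_n\tilde{Q}$, i.e.\ $\tilde{Q}$ must be skew-Hamiltonian. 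An arbitrary symmetric positive definite matrix does not have this property, so the existence of $T_Q$ is precisely the nontrivial claim, not the triviality (preservation of the identity Gramian) that you single out as ``the only point that needs a line of care.''

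The missing link is the converse half of Lemma \ref{lemma:cp2pure} (equivalently, Lemma \ref{lemma:cp-2}): because the $T_P$-transformed system has controllability Gramian equal to $I$ and $D$ of the required form, it is \emph{completely passive}, and it is the observability Gramian of a completely passive system that has the skew-Hamiltonian structure (cf.\ the form \eqref{eq:Q-form} and Proposition \ref{prop:skew-Hamiltonian}) making the unitary symplectic diagonalization in Theorem \ref{thm:quasi-transformed-passive} applicable. You use only the ``symplectic eigenvalues of $P$ are all $1$'' portion of Lemma \ref{lemma:cp2pure}, which by itself is not enough. Note also that this route — yours and the paper's alike — tacitly requires the hypothesis on $D$ (unitary symplectic, or extendable to a unitary symplectic $[D^\top\ E^\top]^\top$) from Lemmas \ref{lemma:cp-2} and \ref{lemma:cp2pure}, even though the corollary statement omits it. Once you add the sentence ``the $T_P$-transformed system is completely passive by Lemma \ref{lemma:cp-2}, hence its observability Gramian $\tilde{Q}$ admits a unitary symplectic quasi-diagonalization,'' your proof is complete and coincides with the paper's.
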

\begin{proof}
The proof of this corollary follows directly from Lemma \ref{lemma:cp2pure} and Theorem \ref{thm:quasi-transformed-passive}.
\end{proof}

We now present an example illustrating the application of the above result.

\begin{example} \label{ex:Gaussian}
Consider a two-mode optical parametric oscillator (OPO), originally considered in \cite{KY12}.
which is driven by two vacuum fields.
This optical parametric oscillator is described by a 2 degree of freedom, 4 input, and 2 output linear quantum stochastic system 
with the following Hamiltonian matrix $R$ and coupling matrix $K$:
\begin{align*}
 R = \left[\begin{array}{cccc}
      0 & \epsilon_1 & 0 & -\gamma \\
      \epsilon_1 & 0 & \gamma & 0 \\
      0 & \gamma & 0 & \epsilon_2 \\
      -\gamma & 0 & \epsilon_2 & 0 
     \end{array}\right], 
 \quad
 K = \sqrt{\gamma} 
     \left[\begin{array}{cccc}
      1 & i & 1 & i \\ 1 & i & 1 & i
     \end{array}\right]
\end{align*}
where $\epsilon_1, \epsilon_2 \in \mathbb{R}$ are the squeezing parameters and $\gamma$ is the decay rate.
The system is asymptotically stable ($A$ is Hurwitz) when $|\epsilon_j| < \gamma$ for all $j = 1,2$.
Using Lemmas \ref{lemma:cp-1} and \ref{lemma:cp-2} in combination with \eqref{eq:P-pure-state}, 
it can be shown that the system has pure steady-state when $\epsilon_1 + \epsilon_2 = 0$
(which coincides with the result of \cite{KY12}).
From Lemma \ref{lemma:cp2pure}, this implies that the system can be symplectically transformed into a completely passive system.
Moreover, using the logarithmic negativity measure $E_N$ for bipartite Gaussian systems, see, e.g., \cite[Eq. (27)]{L05} for its definition, the above system produces entanglement with $E_N = 0.2925$ (although the completely passive system that is similar to this system does not generate entanglement).

Let $\epsilon_1 = 1\times 10^6$, $\epsilon_2 = -1\times 10^6$ and $\gamma = 5\times 10^6$ Hz.
This two-mode OPO can be described by a linear quantum stochastic system with the matrices:
\begin{align*}
 A &= 10^6\left[\begin{array}{cccc}
       -18 & 0 & -10 & 0 \\
       0 & -22 & 0 & -10 \\
       -30 & 0 & -22 & 0 \\
       0 & -30 & 0 & -18 
      \end{array}\right], \\
 B &= -4.47\times 10^3\left[\begin{array}{cc}
       I_2 & I_2 \\ I_2 & I_2
      \end{array}\right], \\ 
 C &= 4.47\times 10^3 \left[\begin{array}{cc}
        I_2 & I_2
       \end{array}\right], \\
 D &= \left[\begin{array}{cc}
        I_2 & 0_{2\times 2}
      \end{array}\right].
\end{align*}
The controllability and observability Gramians $P$ and $Q$ are as follows:
\begin{align*}
 P &= \left[\begin{array}{cccc}
       1.250 & 0 & -0.250 & 0 \\
       0 & 0.833 & 0 & 0.167 \\
       -0.250 & 0 & 1.250 & 0 \\
       0 & 0.167 & 0 & 0.833 
      \end{array}\right],  \\
 Q &= \left[\begin{array}{cccc}
       0.417 & 0 & 0.083 & 0 \\
       0 & 0.625 & 0 & -0.125 \\
       0.083 & 0 & 0.417 & 0 \\
       0 & -0.125 & 0 & 0.625 
      \end{array}\right].
\end{align*}
By applying Corollary \ref{cor:quasi-balanced-pureGaussian}, we have that the above system has a quasi-balanced realization
and the quasi-balancing transformation matrix $T = T_QT_P$ where $T_Q = I$ and
\begin{equation*}
 T_P = \left[\begin{array}{cccc}
       0 & -0.8660 & 0 & 0.8660 \\
       0.5773 & 0 & -0.5774 & 0 \\
       0 & 0.7071 & 0 & 0.7071 \\
       -0.7071 & 0 & -0.7071 & 0
       \end{array}\right].
\end{equation*}
Using the transformation matrix $T$, the two-mode OPO is symplectically transformed to a completely passive linear quantum stochastic system $\tilde{G} = (\tilde{A},\tilde{B},\tilde{C},D)$ where
\begin{align*}
 \tilde{A} &= TAT^{-1} 
   = 10^6\left[\begin{array}{rr}
       0_{2\times 2} & -9.80 I_2 \\
       9.80 I_2 & -40 I_2
      \end{array}\right], \\
 \tilde{B} &= TB
   = -6.32\times 10^3\left[\begin{array}{cc}
       0_{2\times 2} & 0_{2\times 2} \\ \mathbb{J} & \mathbb{J}
      \end{array}\right], \\ 
 \tilde{C} &= CT^{-1}
   = -6.32\times 10^3 \left[\begin{array}{cc}
        0_{2\times 2} & \mathbb{J}
       \end{array}\right].
\end{align*}
The completely passive system can also be described by the following Hamiltonian matrix $\tilde{R} = T^{-\top}RT^{-1}$ and coupling matrix $\tilde{K} = KT^{-1}$ i.e., 
\begin{align*}
 \tilde{R} = 10^6\left[\begin{array}{cccc}
       0 & 0 & 0 & 4.90 \\
       0 & 0 & -4.90 & 0 \\
       0 & -4.90 & 0 & 0 \\
       4.90 & 0 & 0 & 0 \\
      \end{array}\right], \quad
\tilde{K} = 10^3 \left[\begin{array}{cccc}
        0 & 0 & 3.16\imath & -3.16 \\
        0 & 0 & 3.16\imath & -3.16
       \end{array}\right].
\end{align*}
\end{example}

\section{Conclusion} \label{sec:conclusion}

In this work, we 
have presented a more explicit characterization of quasi-balanceable linear quantum stochastic systems than the one previously established in \cite{Nurd13b}.
Previously, quasi-balanceable systems has been shown to include the class of completely passive systems. Here we have shown that quasi-balanceable systems  in fact form a class of systems that is strictly larger than the class of completely passive systems. 
We also established a new complete parameterization of completely passive linear quantum stochastic systems solely in terms of the controllability Gramian. Furthermore, as a by-product of our results, a connection between completely passive systems and linear quantum stochastic systems with a pure Gaussian steady-state is presented. This connection has allowed us to provide a novel characterization of linear quantum stochastic systems dissipatively generating a pure Gaussian steady state, different from the one previously obtained in \cite{KY12}. In fact, we showed that all linear quantum stochastic systems producing a pure Gaussian steady state are symplectic transformations of completely passive linear quantum stochastic systems. Subsequently, all such systems are quasi-balanceable.

\bibliographystyle{IEEEtran} 
\bibliography{rip,otr}

\end{document}